\documentclass[conference]{IEEEtran}
\IEEEoverridecommandlockouts
\usepackage{cite}
\usepackage{amsmath,amssymb,amsfonts, amsthm}
\usepackage{algorithm}
\usepackage[table]{xcolor}
\usepackage{algpseudocode}
\usepackage{graphicx}
\usepackage{textcomp}
\usepackage{xcolor}
\usepackage{multirow}

\newtheorem{theorem}{Theorem}

\def\BibTeX{{\rm B\kern-.05em{\sc i\kern-.025em b}\kern-.08em
    T\kern-.1667em\lower.7ex\hbox{E}\kern-.125emX}}
\begin{document}

\title{Reed-Muller Codes for Joint Random and Stuck-At Error Correction\\
%{\footnotesize \textsuperscript{*}Note: Sub-titles are not captured for https://ieeexplore.ieee.org  and
%should not be used}
%\thanks{Identify applicable funding agency here. If none, delete this.}
}

%\author{\IEEEauthorblockN{1\textsuperscript{st} Ivana Djurdjevic}
%\IEEEauthorblockA{\textit{dept. name of organization (of Aff.)} \\
%\textit{name of organization (of Aff.)}\\
%City, Country \\
%email address or ORCID}
%\and
%\IEEEauthorblockN{2\textsuperscript{nd} Robert Mateescu}
%\IEEEauthorblockA{\textit{dept. name of organization (of Aff.)} \\
%\textit{name of organization (of Aff.)}\\
%City, Country \\
%email address or ORCID}
%\and
%\IEEEauthorblockN{3\textsuperscript{rd} Cyril Guyot}
%\IEEEauthorblockA{\textit{dept. name of organization (of Aff.)} \\
%\textit{name of organization (of Aff.)}\\
%City, Country \\
%email address or ORCID}
%}
\author{\IEEEauthorblockN{Ivana Djurdjevic, Robert Mateescu and Cyril Guyot}
\IEEEauthorblockA{\textit{WD Research}\\
San Jose, California, USA \\
Email: ivana.djurdjevic@wdc.com, robert.mateescu@wdc.com and cyril.guyot@wdc.com}
}

\maketitle

\begin{abstract}
Block codes are considered for improving the reliability of messages stored in a computer memory with both stuck-at defects and random errors. It is assumed that the side information about the state of the defects is available to the encoder, but not to the decoder. A novel recursive construction of a set of masks is developed such that it can satisfy any $s$ stuck-at errors in a $2^m$ binary sequence, when $s \leq m$. We prove that the masks generated in this way are codewords in a Reed-Muller $RM(s-1, m)$ code. The constructed set contains no more than $2^s m^{s-1}$ masks. We provide the lower and the upper bound on the size of the stuck-at redundancy, a fixed subset of mask bits that uniquely represents each mask in the set. The stuck-at code constructed in this way is a non-linear code. It is also a subcode of an $RM(r,m)$ code, with $ r \geq s-1$, that can be used for additional random error correction. The encoding requires no mask search and is straightforward based on the description of the recursive construction. The decoding is done in a single attempt and requires almost no additional complexity or latency.  
\end{abstract}

\begin{IEEEkeywords}
flash memory, non-volatile memory, stuck cells, linear error-correcting codes, BCH codes, Reed-Muller codes, side information, non-linear codes
\end{IEEEkeywords}

\section{Introduction}
In storage systems, media bit errors are an inevitable challenge that degrades the performance of the system. We distinguish two main types of errors. The first type are errors that arise from different types of random noise. These errors are intermittent and also referred to as soft errors. Another notable phenomenon is the occurrence of stuck-at errors or defects, where certain bits become fixed at a particular level (either 0 or 1) due to physical defects or malfunctioning hardware components. In order to maintain high reliability of the read messages, careful design of efficient and powerful error control coding techniques is essential. In the case of stuck bits, error correction techniques need to be more effective in order to account for the fact that some bits cannot be toggled.

Stuck-at errors typically arise in various non-volatile storage devices such as flash memory, phase-change memory (PCM), magnetoresistive memory (MRAM), and the like. These devices require specialized approaches to ensure data integrity when errors such as stuck bits occur. On the decoder side, techniques for decoding linear block codes with random errors and erasures can be directly applied together with various data recovery techniques. In some instances, the reading and decoding time is more critical than that of writing and encoding. In these situations, if additional knowledge about the positions of the stuck bits and the value to which they are stuck is available on the encoding side, it may be utilized to improve the performance of the coding system. This side information is used by incorporating stuck-at information in the encoding of linear block codes in a way that the decoder can decode the message correctly without any side information.

The coding for stuck-at errors originated in a Kuznetsov and Tsybakov work \cite{Kuznetsov-Tsybakov}. They consider coding for binary memories with stuck-at cells with the location and nature of the defects available to the encoder and not to the decoder. This work is expanded to coding for joint stuck-at and random errors in \cite{Kuznetsov_etal}, providing bounds on information rate and construction examples for $s=2$. Most of the prior work on joint random and stuck-at error correction focuses on partitioning a traditional linear code into disjoint subsets in such a way that for any combination of a certain number of stuck-at errors it is possible to find a codeword in one of these subsets that has target defect values at stuck-at  locations \cite{Heegard}. The extensive analysis for these types of approaches, both for binary and q-ary symbols can be found in \cite{kim2022codingboundspartiallydefective}. Other relevant stuck bit error correction work can be found in \cite{Borden-Vinck, Gabrys_etal, Wachter-Zeh_Yaakobi, Motwani_etal, con2024codefitsallstrong, dumer}.

Despite the significant advances in stuck bit coding, there are still several challenges that need to be addressed. Joint correction of binary stuck-at and random errors has been considered mostly using BCH codes \cite{Heegard} with extra redundancy for masking stuck-at defects being of the same order as the redundancy used to combat random errors, i.e. proportional to $s\log_2 n$. Furthermore, to the best of our knowledge, systematic construction of encoding masking set that can address arbitrary number of stuck-at errors $s$ together with an arbitrary number of random errors $t$ has not been developed. In this work, we aim to address some of these issues. In Section II, we define the problem and provide basic assumptions and terminology. In Section III, we present a recursive scheme that constructs a set of masks correcting any $1 \leq s \leq m$ stuck bit errors in a sequence of $2^m$ bits. It is proven that the number of masks does not exceed $2^s m^{s-1}$. This construction can be viewed as a generalization of the result in \cite{Kuznetsov-Tsybakov} for $s=2$ to any $s$. We also show that the masks belong to a Reed-Muller $RM(s-1, m)$ code. In Section IV, we describe the process of labeling masks in the set, i.e. determining the redundant bit positions. For $s=2$, we give an algorithm for finding all possible labels. For $s>2$, we use greedy search for finding a label and provide bounds on the number of label bits. The bounds show that the redundancy grows as $\log \log n = \log m$ for fixed $s$. In Section V, we describe how constructed set of masks can be used together with Reed-Muller code for joint stuck-at and random error correction. Section VI concludes the work and provides future directions.

\section{Problem Statement}

Consider an encoder for a joint stuck-at and random error correction, using an additive scheme for masking multiple defects as shown in Fig.~\ref{fig-block-diagram-jnt-stuck-rand-error-encoder}. The symbols written in the memory are binary. The redundant bits that eventually represent a stuck-at mask label are inserted in a binary message $\mathbf{u}$ at appropriate predefined locations to form an intermediate message $\mathbf{v}$. They are inserted as zeros initially as the mask is not determined yet. The intermediate message $\mathbf{v}$ is encoded using a systematic random error correction code to form an intermediate codeword $\mathbf{b}$. The mask generator block is provided the side information on the location and value of stuck-at defects in addition to the intermediate codeword $\mathbf{b}$. Based on this information, the mask $\mathbf{m}$ is generated such that it belongs to the code space of the random error correcting code used and that the final codeword $\mathbf{c}=\mathbf{m}+\mathbf{b}$ satisfies all stuck-at values. This process is called {\it masking the defect}. We also say that a particular mask {\it covers} particular stuck-at defects. Since all stuck-at positions are satisfied in the written sequence, on the decoder side shown in Fig.~\ref{fig-block-diagram-jnt-stuck-rand-error-decoder}, the masked message is read with no stuck-at errors and only contains random errors. The random error correcting code is used to correct random errors and determine the codeword estimate $\hat{\mathbf{c}}$. The label bits are used to lookup the estimate of the additive mask $\hat{\mathbf{m}}$. The decoder obtains an intermediate codeword estimate $\hat{\mathbf{b}}$ by adding the mask estimate to the codeword estimate. Finally, a message estimate $\hat{\mathbf{u}}$ is retrieved by removing the inserted mask label bits and random error correction code redundant parity bits. 

\begin{figure}[htbp]
\centerline{\includegraphics[width=8cm]{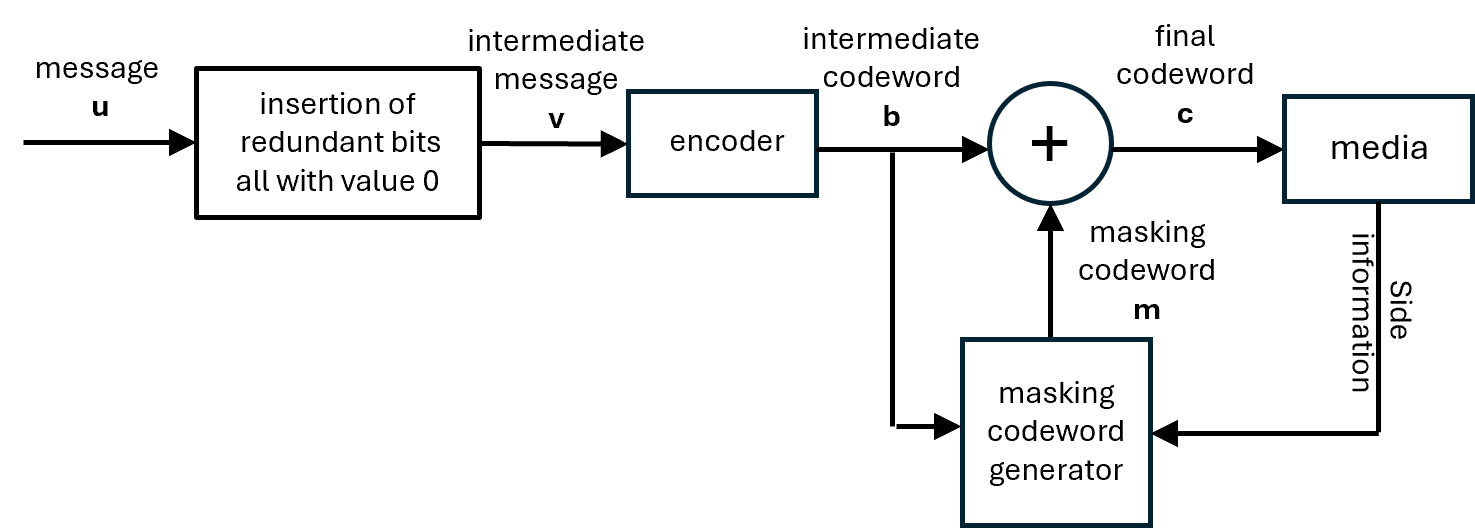}}
\caption{Block diagram of an encoder for joint random error correction and additive masking of stuck-at defects.}
\label{fig-block-diagram-jnt-stuck-rand-error-encoder}
\end{figure}

\begin{figure}[htbp]
\centerline{\includegraphics[width=8cm]{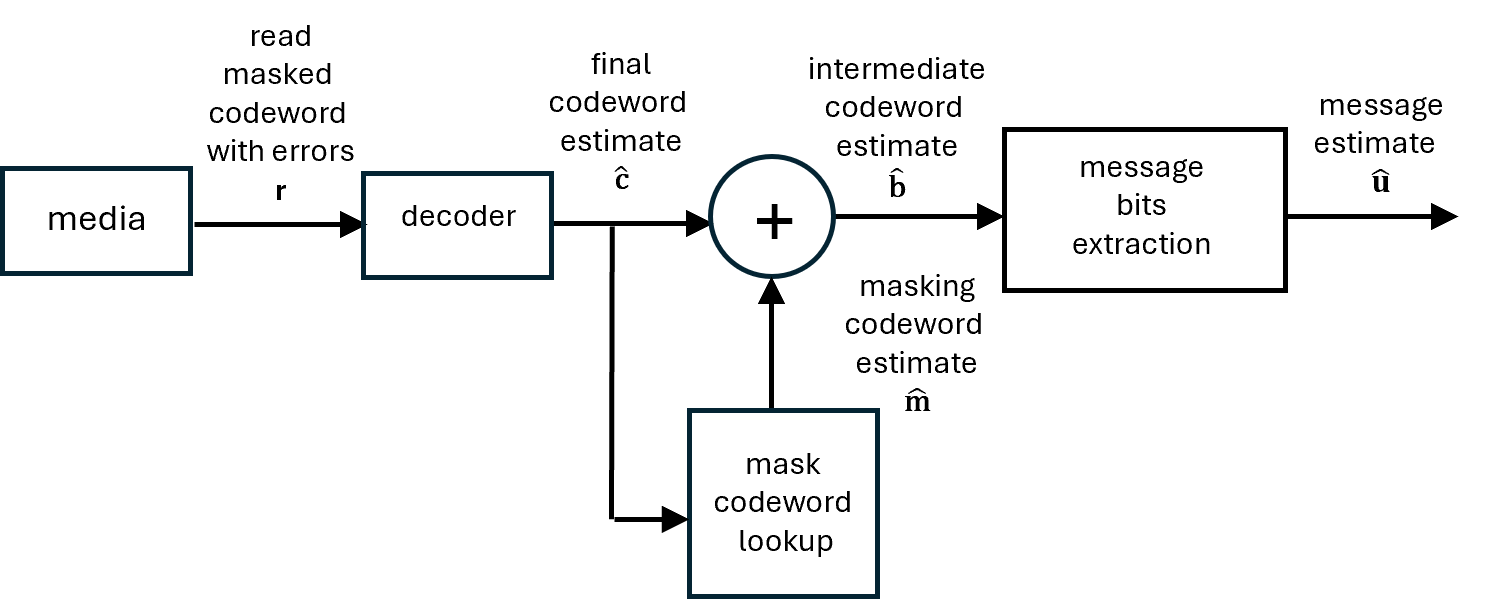}}
\caption{Block diagram of a decoder for joint random error correction and additive masking of stuck-at defects.}
\label{fig-block-diagram-jnt-stuck-rand-error-decoder}
\end{figure}

The details of encoding and decoding algorithms are given in Algorithm \ref{alg:encII} and Algorithm \ref{alg:decII}.

\begin{algorithm} 
\caption{Encoding}\label{alg:encII}
INPUT: 
\begin{algorithmic}[1]  
\State message $\mathbf{u} = \left[ u_0, u_1, \ldots, u_{k-L-1} \right]$ with $L$ being a number of label bits 
\State stuck-at locations $\{i_0, i_1, \ldots,  i_{s-1}\} \subset \{0,1,\ldots,n-1\}$ and corresponding stuck-at values $d_i \in \{0,1\}$ for all $i \in \{0,1,\ldots, s-1\}$ 
\State set $M$ of mask codewords in an $(n,k)$ systematic code $\mathcal{C}$ and label bit locations $\{l_0, l_1, \ldots, l_{L-1}\} \subset \{0,1,\ldots,k-1\}$ 
\end{algorithmic}
OUTPUT: masked codeword $\mathbf{c}$

\begin{algorithmic}[1]  
\State Form intermediate message $\mathbf{v} = \left[ v_0 \; v_1 \; \ldots \; v_{k-1} \right]$ by inserting zeros at label bit locations
\State Encode intermediate message using systematic $(n,k)$ code $\mathcal{C}$ to add $n-k$ parity bits and obtain an intermediate codeword $\mathbf{b}$
\State Find a mask $\mathbf{m}=\left[m_0 \; m_1 \; \ldots m_{n-1} \right]$ such that:
\begin{equation*}
b_i + m_i = d_i \;\; \forall i \in \{ i_0, i_1, \ldots, i_{s-1} \}
\end{equation*}
\State Compute final masked codeword: $\mathbf{c} = \mathbf{v} + \mathbf{m}$
\end{algorithmic}  
\end{algorithm}

\begin{algorithm} 
\caption{Decoding}\label{alg:decII}
INPUT: \\
read masked codeword corrupted with random errors $\mathbf{r} = \mathbf{c} + \mathbf{e}$ \\
OUTPUT: \\
message estimate $\hat{\mathbf{m}}$

\begin{algorithmic}[1]  
\State Decode read word $\mathbf{r}$ to correct random errors and obtain written codeword estimate $\hat{\mathbf{c}}$
\State Extract mask label: $\left(\hat{c}_{l_0}, \hat{c}_{l_1}, \ldots, \hat{c}_{l_{L-1}}\right)$
\State Find the mask $\hat{\mathbf{m}}$ that corresponds to the extracted label
\State Compute intermediate codeword estimate:
\begin{equation*}
\hat{\mathbf{b}} = \hat{\mathbf{c}} - \hat{\mathbf{m}} 
\end{equation*}
\State Remove label bits $\hat{b}_{l_0}, \hat{b}_{l_1}, \ldots, \hat{b}_{l_{L-1}}$ and $n-k$ parity bits from the intermediate codeword to obtain message estimate $\hat{\mathbf{u}}$.
\end{algorithmic}  
\end{algorithm}

In order to fully define the encoding and decoding process in Fig.~\ref{fig-block-diagram-jnt-stuck-rand-error-encoder} and Fig.~\ref{fig-block-diagram-jnt-stuck-rand-error-decoder}, we need to define the set of mask codewords or an algorithm for mask codeword generation such that for any $s$ stuck-at positions, any combination of stuck-at values in those $s$ positions and any input binary codeword $\mathbf{c}$ that belongs to a binary random error correcting $(n,k)$ code $\mathcal{C}$, an appropriate codeword mask $\mathbf{m} \in \mathcal{C}$ can be generated or found such that $\mathbf{c} + \mathbf{m}$ at $s$ stuck-at positions has correct stuck-at values. In the process of defining the mask set, it is important to minimize the required redundancy for a given $n$, $k$ and $s$. In particular, we would like to achieve redundancy that is better than the redundancy of a typical random error correcting code. For example, the redundancy of high rate BCH codes correcting $t$ errors is proportional to the $t \log_2 n$. The cardinality of the mask set determines the lower bound on the stuck-at redundancy. In addition to defining the mask set, selecting the positions used for mask labeling within the mask set also presents a challenge. The label size equals the redundancy needed for stuck-at defect masking and minimizing the label size is the ultimate goal that maximizes the overall code rate.

\subsection{Preliminaries: stuck-at defects only}

Consider the case of no random errors. Note that this is a special case of the scheme desribed in Fig. \ref{fig-block-diagram-jnt-stuck-rand-error-encoder} and Fig. \ref{fig-block-diagram-jnt-stuck-rand-error-decoder} when the masks in the set do not need to satisfy any random error correcting code constraints. First, assume a single stuck-at error. The set of masks is trivial and consists of an all-1 word and an all-0 word. The required redundancy is one extra bit regardless of the message length. Finding the smallest set of masks that covers any possible $2$ stuck-at defects is a harder problem. One such set is shown for the example of $n=8$ in Fig.~\ref{fig-masking-words-0}. The two initial masks, all-zero and all-one mask, cover any $(i,j)$ patterns of $(0,0)$ or $(1,1)$. The patterns $(0,1)$ and  $(1,0)$ are checked recursively for $(i,j)$ positions in different halves. In general, if $n = 2^m$, the total number of masks is $2(1+ m)$ and the required redundancy is at least $ \lceil 1+  \log_2 (m+1) \rceil$. This construction can be found in \cite{Kuznetsov-Tsybakov}. 

\begin{figure}[htbp]
\centerline{\includegraphics[width=3.5cm]{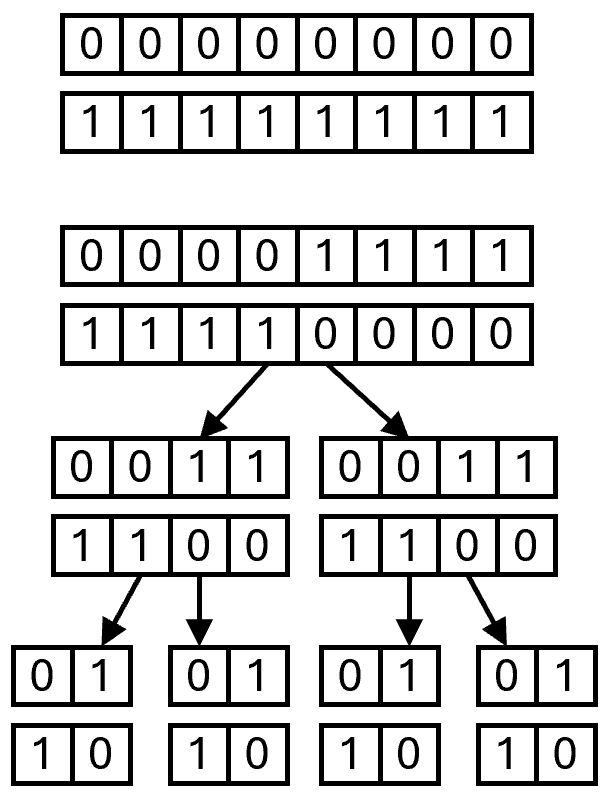}}
\caption{A set of 8 masks that can correct any 2 stuck-at defects in a message of length 8}
\label{fig-masking-words-0}
\end{figure}

For $s>2$ stuck-at defects, the problem of mask set construction, and especially a mask set that is a subset of a random error correcting code, is significantly harder. Bounds on the minimal redundancy $r_{\min}(n, s)$ of a code of length $n$ correcting stuck defects of multiplicity $s$, for $s \geq 2$, are as follows \cite{dumer}: 
\begin{IEEEeqnarray}{rCl}
\label{eq:KTbound}
\log \log n \leq r_{\min}(n, s) \leq  s + \lceil \log \ln  2^s \binom{n}{s}  \rceil
\end{IEEEeqnarray}
Therefore, redundancy $r(s,n) \sim \log \log n$, as $n \rightarrow \infty$ for $s = \mathrm{const}$, is considered asymptotically optimal for $s\geq 2$. The construction of asymptotically optimal encoding for stuck-at correction only is described in \cite{dumer}. The construction of joint random and stuck-at correction is given in \cite{Heegard}, but $\alpha(s,n) \sim \log n$ as $n \rightarrow \infty$ for $s = \mathrm{const}$. The code for random error correction is BCH and generated masks are in the BCH code space. In this work, we describe a novel construction of the mask set and we show that the mask set is in the Reed-Muller code space. We also show that the redundancy grows $\sim \beta(s) \log \log n$, as $n \rightarrow \infty$ for $s = \mathrm{const}$, with $\beta(s)$ being a constant depending on $s$.

\section{Recursive construction and properties of the mask set}

Let $M(s,m)$ denote the set of masks that can be used to cover any $s$ stuck-at defects in a binary sequence of length $2^m$. 
Consider the following recursive construction for any $s\geq2$:
\begin{IEEEeqnarray}{rCl}
\label{eq:recursive_union}
 M(s,m) &=&\{ [\mathbf{m}, \mathbf{m}] \; \forall \mathbf{m} \in M(s,m-1) \}   \nonumber \\
&\cup_{i=1}^{s-1}&  \{  [\mathbf{m}_1, \mathbf{m}_2]  \forall   \mathbf{m}_1 \in M(i,m-1),  \nonumber \\
&& \; \forall \mathbf{m}_2 \in M(s-i,m-1) \}. 
\end{IEEEeqnarray}
The final levels of recursion are: 
\begin{itemize}
\item $M(1,m-j)$, consisting of $2$ masks, one all-zero and one all-one mask, of length $2^{m-j}$, 
\item $M(s,l)$, with $l=\lceil log_2 s \rceil$, consisting of $2^{2^l}$ masks that are all binary $2^l$-tuples.
\end{itemize}
The first set in the union in (\ref{eq:recursive_union}) contains all masks that can cover any $s$ stuck bits in either left or right half of the binary message. The second set in the union in (\ref{eq:recursive_union}) contains all masks that can cover $i$ stuck bits in the left half and $s-i$ stuck bits in the right half of the binary message, for $i=1,2,\ldots, s-1$. 

An example of a set of masks correcting any $s=3$ stuck-at defects obtained from this recursive construction for binary sequence of length $2^3=8$ is given in Fig.~\ref{fig-example-recursive-s3m3-v2}. The block labeled with $0$ is an all-zero matrix and the block labeled with $1$ is an all-one matrix, both of appropriate dimensions matching neighboring blocks. The total number of masks of length $2^3=8$ in the example in Fig.~\ref{fig-example-recursive-s3m3-v2} is $40$. However, the number of distinct masks in the set $M(3,3)$ is $34$ and the redundancy required to represent this set of masks is at least $6$ bits.

Note that the set of masks covering any $2$ stuck-at errors in a sequence of length $2^m$ constructed in this way corresponds exactly to the construction given in \cite{Kuznetsov-Tsybakov}. 

\begin{figure}[htbp]
\centerline{\includegraphics[width=8.5cm]{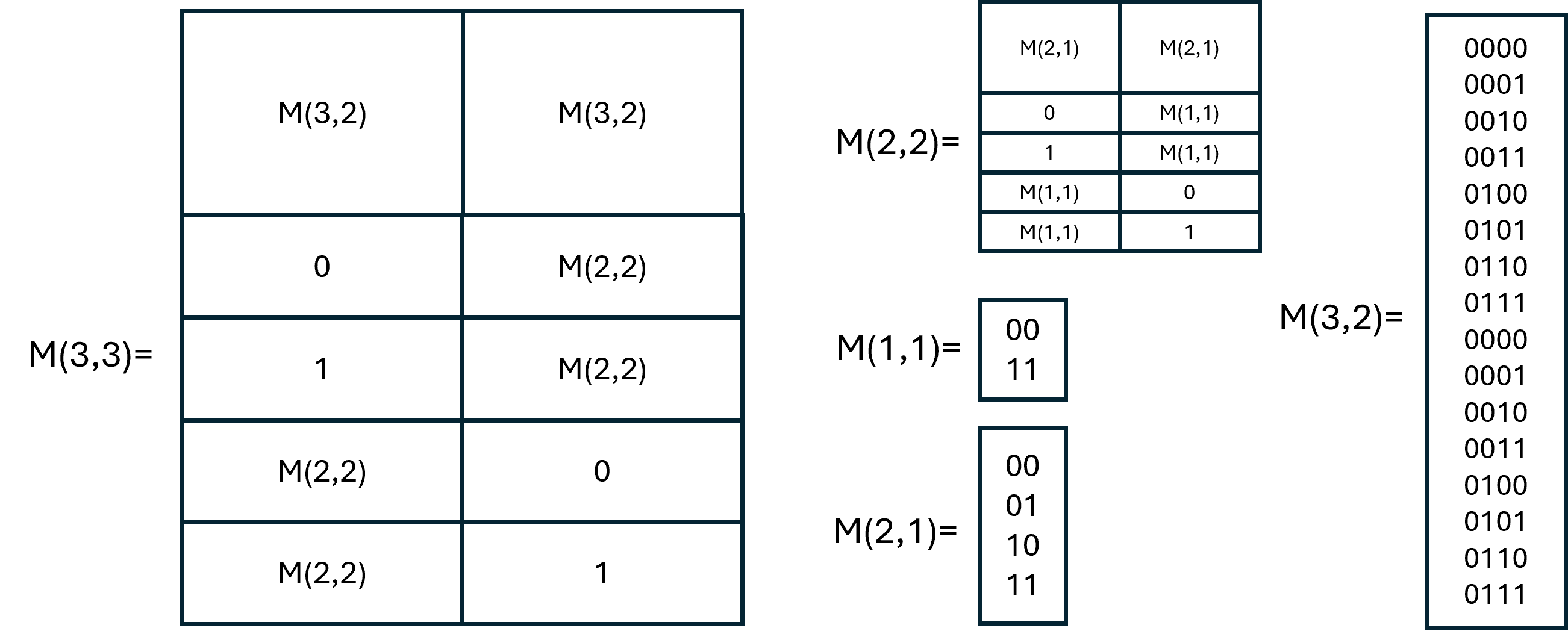}}
\caption{Recursive construction of masks of length 8 that can cover any $s=3$ stuck errors}
\label{fig-example-recursive-s3m3-v2}
\end{figure}

An upper bound on the number of masks in the set $M(s,m)$, denoted by $N_M(s,m)$, is given in the next theorem.
See Appendix~\ref{app:proof-thm-max-num-masks} for the full proof.

\begin{theorem}
\label{thm-max-num-masks}
The number of masks obtained using recursive construction in (\ref{eq:recursive_union}) is at most $2^s m^{s-1}$.
\end{theorem}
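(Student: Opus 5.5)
The plan is to let $N_M(s,m)$ count the masks produced by (\ref{eq:recursive_union}) with multiplicity, which can only overcount the distinct masks. I then prove $N_M(s,m)\le 2^s m^{s-1}$ by induction on $m$, for all pairs $(s,m)$ with $m\ge l(s):=\lceil\log_2 s\rceil$. This range contains the case $s\le m$ of interest and is closed under the recursion. The subproblems appearing in (\ref{eq:recursive_union}) are $(s,m-1)$ and $(i,m-1)$ with $i<s$, and $l(i)\le l(s)$. So whenever $m-1\ge l(s)$, every subproblem lies in the range. If instead $m=l(s)$, or $s=1$, we are at a terminal level of the recursion.

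\textbf{Recursive inequality.} From (\ref{eq:recursive_union}), counting the union with multiplicity gives
\begin{equation*}
N_M(s,m)\le N_M(s,m-1)+\sum_{i=1}^{s-1}N_M(i,m-1)\,N_M(s-i,m-1).
\end{equation*}
Assume the bound at level $m-1$ for every $s'\le s$. Each product term is then at most $2^i(m-1)^{i-1}\cdot 2^{s-i}(m-1)^{s-i-1}=2^s(m-1)^{s-2}$, independently of $i$. Hence
\begin{equation*}
N_M(s,m)\le 2^s\left[(m-1)^{s-1}+(s-1)(m-1)^{s-2}\right]\le 2^s\bigl((m-1)+1\bigr)^{s-1}=2^s m^{s-1}.
\end{equation*}
The last step keeps only the first two terms of the binomial expansion, all of whose terms are nonnegative.

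\textbf{Base cases.} For $s=1$ we have $N_M(1,m)=2=2^1m^0$. For the terminal level $m=l=\lceil\log_2 s\rceil$ with $s\ge2$, we have $N_M(s,l)=2^{2^l}$. Since $2^{l-1}<s$, it follows that $2^{l-1}\le s-1$, so $2^l\le 2s-2$ and
\begin{equation*}
N_M(s,l)\le 2^{2s-2}=2^s\,2^{s-2}.
\end{equation*}
If $s=2$ then $l=1$, and the bound reads $4\le 2^2\cdot1$. If $s\ge3$ then $l\ge2$, so $2^{s-2}\le l^{s-2}\le l^{s-1}$, and we get $N_M(s,l)\le 2^s l^{s-1}$.

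I expect the main obstacle to be this terminal-level base case, because the exponential count $2^{2^l}$ must be reconciled with the polynomial form of the bound. The inequality $2^l\le 2s-2$ resolves it. I also need to verify that the induction is well-founded, namely that every recursive call either stays in the range $m\ge l(s)$ or hits one of these base cases; the closure argument in the first paragraph covers this.
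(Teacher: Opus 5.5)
Your proof is correct. It shares the paper's skeleton: the union-bound recursion, bounding each product term by $2^s(m-1)^{s-2}$, and the terminal inequality $2^{2^l}\le 2^s l^{s-1}$. Where it differs is in how the recursion in $m$ is handled.

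The paper inducts on $s$ only. For fixed $s$ it unrolls the recursion all the way down to level $l$:
$N_M(s,m)\le N_M(s,l)+\sum_{j=1}^{m-l}\sum_{i=1}^{s-1}N_M(i,m-j)N_M(s-i,m-j)$.
It bounds each product by $2^s(m-j)^{s-2}$ using the hypothesis for smaller $s$, and compares the resulting power sum with $\int_l^m x^{s-2}\,dx$. This gives $N_M(s,m)\le 2^{2^l}-2^s l^{s-1}+2^s m^{s-1}$, and $2^{2^l}\le 2^s l^{s-1}$ finishes.

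You instead do a one-step induction on $m$, simultaneously for all $s$ with $m\ge\lceil\log_2 s\rceil$. This needs the bound for the same $s$ at level $m-1$, which is why your closure check on the admissible range is necessary, and you supply it. You close the step with the binomial inequality $(m-1)^{s-1}+(s-1)(m-1)^{s-2}\le m^{s-1}$ in place of the integral comparison. Your route is shorter and avoids calculus. The paper's telescoped form shows the slack coming from the terminal level explicitly in one formula.

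On the shared base case, your derivation via $2^{l-1}\le s-1$, hence $2^{2^l}\le 2^{2s-2}$, is valid. The paper's stated justification includes the false inequality $l\ge 2^{s-1}$ (e.g.\ $s=3$, $l=2$), so your treatment of that step is the more careful one.
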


The next theorem shows the relationship between mask sets.
See Appendix~\ref{app:proof-thm-masks-subsets} for the full proof.
\begin{theorem}
\label{thm-masks-subsets}
The set of masks covering $s-1$ stuck bits in a binary sequence of length $2^m$ is a subset of the set of masks covering $s$ stuck bits in a binary sequence of the same length, i.e. $M(s-1,m) \subset M(s,m)$.
\end{theorem}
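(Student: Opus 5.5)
We argue by induction on $m$, for all $s$ at once, that $M(s-1,m)\subseteq M(s,m)$ whenever $2\le s$. Since the recursion in (\ref{eq:recursive_union}) is stated for $s\ge 2$ and $M(1,\cdot)$ is a terminal set, we first fix the conventions at the boundary of the recursion. These are: $M(0,m)$ may be taken to be $\{\mathbf{0}\}$ or handled separately, and $M(s,l)$ with $2^l\ge s$ consists of all binary words of length $2^l$. We treat the smallest case $s=2$ directly. Every mask in $M(2,m)$ arises from the first set of the union applied repeatedly, ending at $M(2,1)$, which contains all four 2-tuples and in particular $00$ and $11$. Hence $M(2,m)$ contains $[\mathbf{m},\mathbf{m}]$ for $\mathbf{m}\in\{0\cdots0,1\cdots1\}$, i.e.\ the all-zero and all-one words. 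This gives $M(1,m)\subseteq M(2,m)$ by induction on $m$.

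For the general step, take $s\ge3$ and assume the inclusion $M(s'-1,m-1)\subseteq M(s',m-1)$ for all $2\le s'\le s$. We pick an element of $M(s-1,m)$ and split according to which part of the union (\ref{eq:recursive_union}) produced it.

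If it has the form $[\mathbf{m},\mathbf{m}]$ with $\mathbf{m}\in M(s-1,m-1)$, the induction hypothesis gives $\mathbf{m}\in M(s,m-1)$. Hence $[\mathbf{m},\mathbf{m}]$ lies in the first part of the union defining $M(s,m)$.

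If it has the form $[\mathbf{m}_1,\mathbf{m}_2]$ with $\mathbf{m}_1\in M(i,m-1)$ and $\mathbf{m}_2\in M(s-1-i,m-1)$ for some $1\le i\le s-2$, we apply the hypothesis on the right half (with $s'=s-i$, which satisfies $2\le s-i\le s$) to get $\mathbf{m}_2\in M(s-i,m-1)$. The pair then appears in the $i$-th term of the second union for $M(s,m)$, since $i\le s-1$. The case $s-1-i=1$ is covered by the base case $M(1,\cdot)\subseteq M(2,\cdot)$ established above.

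The remaining work is the base of the induction in $m$, where the recursion bottoms out. The terminal levels differ: $M(s-1,m)$ stops at $l'=\lceil\log_2(s-1)\rceil$, while $M(s,m)$ stops at $l=\lceil\log_2 s\rceil\ge l'$. When $m\le l$, the set $M(s,m)$ is all $2^{2^m}$ binary words, so the inclusion is trivial. When $m>l$, both sets are defined by (\ref{eq:recursive_union}) and the inductive step above applies. The case $l'<m\le l$ needs no comparison at all, since $M(s,m)$ is the full space there. Thus the induction can be anchored at $m=l$ for each $s$, and this bookkeeping of the differing terminal lengths (together with $s=2$ versus the terminal $M(1,\cdot)$) is the main point requiring care. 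The algebraic steps are immediate from the shape of the union.
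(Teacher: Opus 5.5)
Your argument is correct and is essentially the paper's proof. You use the same split of $M(s-1,m)$ into the diagonal part $[\mathbf{m},\mathbf{m}]$, which reduces to the inclusion at length $2^{m-1}$ and bottoms out at the full space $M(s,l)$, and the cross parts $M(i,m-1)\bigotimes M(s-1-i,m-1)\subseteq M(i,m-1)\bigotimes M(s-i,m-1)$; you just package this as one induction on $m$ uniform in $s$, rather than the paper's induction on $s$ followed by unwinding in $m$, which makes the terminal-level bookkeeping more explicit.

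One small slip: it is not true that every mask of $M(2,m)$ comes from the first part of the union (e.g.\ $0011\in M(2,2)$ comes only from $M(1,1)\bigotimes M(1,1)$). You only need the all-zero and all-one words to lie in $M(2,m)$, and your argument does give that, as does $M(1,m-1)\bigotimes M(1,m-1)$ directly.
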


\subsection{Relationship to Reed-Muller codes}

Let us denote binary field as $\mathbb{F}_2$. A set of all binary $m$-dimensional vectors can be written as $X = \mathbb{F}_2^m = \{x_0,x_1,\ldots, x_{2^m-1} \}$. We define in $2^m$-dimensional space $\mathbb{F}_2^{2^m}$ the indicator vectors $\mathbb{I}_A \in \mathbb{F}_2^{2^m}$ on subsets $A \subset X$ by:
\begin{equation*}
( \mathbb{I}_A)_i =\begin{cases}
    1 & \text{if } x_i \in A \\\\
   0 & \text{if } otherwise
\end{cases}
\end{equation*}
A Reed-Muller code is defined by two parameters, $r$ and $m$. The first parameter $r$  is an order of the code and determines the code rate, while the second parameter $m$ defines the length of the code, $2^m$.
A Reed-Muller code of order $0$, $R(0,m)$ has a generator matrix with a single row that is an incidence vector of the entire space $ \mathbb{F}_2^m$. It is a repetition code of length $2^m$. The generator matrix of a $RM(1,m)$ code consists of $m+1$ row vectors of length $2^m$ where the first row is an all-one vector. The remaining $m$ rows are indicator vectors of $(m-1)$-dimensional subspaces of $m$-dimensional space. For example, let $m=3$. Then $n=8$ and 
\[ X = \mathbb{F}_2^3 = \{ (0,0,0), (0,0,1), (0,1,0), \ldots (1,1,1)\}. \]
The generator matrix of $RM(1,3)$ is:
\begin{equation*}
G = 
\begin{bmatrix}
1 & 1 & 1 & 1 & 1 & 1 & 1 & 1 \\
1 & 0 & 1 & 0 & 1 & 0 & 1 & 0 \\
1 & 1 & 0 & 0 & 1 & 1 & 0 & 0 \\
1 & 1 & 1 & 1 & 0 & 0 & 0 & 0 
\end{bmatrix}
\end{equation*}
A Reed-Muller code $RM(r,m)$ of order $r>1$ contains in its generator matrix indicator vectors of all possible intersections of at most $r$ different $(m-1)$-dimensional subspaces. However, the $RM(r,m)$ code can also be constructed recursively as follows:
\[ RM(r,m) = RM(r,m-1) | RM(r-1, m-1) \]
where $|$ denotes bar product of two codes defined as: $C_1 | C_2 = \{ (c_1 | c_1+c_2): c_1 \in C_1, c_2 \in C_2 \}$.
This recursive construction is also called Plotkin's construction and it is equivalent to the recursive expression for the generator matrix of $RM(r,m)$ code:
\begin{equation*}
G(r,m) = 
\begin{bmatrix}
G(r,m-1) & G(r,m-1) \\
0 & G(r-1, m-1) 
\end{bmatrix}
\end{equation*}
Note that code $RM(0,m)$ is a subset of code $RM(1,m)$. Using Plotkin's recursive construction and induction we can prove that Reed-Muller codes of the same length and increasing orders have the following relationship: $RM(0,m) \subset RM(1,m) \subset RM(2,m) \subset \cdots \subset RM(m-1,m) \subset RM(m,m)$.
Observe that $RM(m,m)$ is a code rate $1$ code with codewords being all $2^{2^m}$ binary $2^m$-tuples. The recursive construction and nested structure of Reed-Muller codes of the same length imply:
\begin{itemize}
\item $(c_i, c_i) \in RM(r,m)$ for any $c_i \in RM(r,m-1)$, and
\item $(c_i, c_j) \in RM(r,m)$ for any two codewords $c_i, c_j \in RM(r-1,m-1) \subset RM(r,m-1)$. Note that this is not true for any two codewords in $RM(r,m-1)$. 
\end{itemize}

A Reed-Muller code $RM(0,m)$ consists of only all-one and all-zero vector. Hence, $M(1,m) \subseteq RM(0,m)$.

Also, it is easy to see that masks constructed for $s=2$ stuck bits in \cite{Kuznetsov-Tsybakov} and shown in an example for $n=2^3$ in Fig.~\ref{fig-masking-words-0} are codewords in Reed-Muller code $RM(1,m)$. In fact, they are rows of $R(1,m)$ code's generator matrix and their complements. Therefore, $M(2,m) 
\subset RM(1,m)$. 

A relationship between masks of the construction in (\ref{eq:recursive_union}) and Reed-Muller code and its generator matrix exists for any $s$, as stated in the following two theorems.
See Appendix~\ref{app:proof-thm-masks-in-RMcode} and Appendix~\ref{app:proof-thm-masks-in-G-RMcode} for the full proofs.
\begin{theorem}
\label{thm-masks-in-RMcode}
The set of masks covering $s$ stuck bits in a binary sequence of length $2^m$ generated by the recursion in (\ref{eq:recursive_union}) is a subset of the set of codewords of Reed-Muller code of order $s-1$ and length $2^m$, $RM(s-1,m)$.
\end{theorem}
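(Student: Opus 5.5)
We argue by strong induction on $m$, and for each fixed $m$ on $s$, that $M(s,m)\subseteq RM(s-1,m)$ for all $s\ge 1$. We use only the two bullet-point consequences of Plotkin's construction and the nesting $RM(r,m)\subseteq RM(r',m)$ for $r\le r'$, all recorded just above the statement.

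The base cases are the terminal levels of the recursion in (\ref{eq:recursive_union}). For $s=1$, $M(1,m')$ consists of the all-zero and all-one words, which lie in $RM(0,m')$, as already observed. For the other terminal level, $M(s,l)$ with $l=\lceil \log_2 s\rceil$ is the set of all binary $2^l$-tuples, so we need $RM(s-1,l)=\mathbb{F}_2^{2^l}$. This holds because $2^{l-1}<s$ forces $s-1\ge 2^{l-1}\ge l$ for $l\ge 1$, and $RM(r,l)$ with $r\ge l$ is the full space $RM(l,l)$ by the nesting. The case $l=0$ gives $s=1$, which is already covered. More generally, whenever $s-1\ge m$ the inclusion is trivial, so any terminal case with $m\le \lceil\log_2 s\rceil$ is harmless.

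For the inductive step, take $s\ge 2$ and $m>\lceil\log_2 s\rceil$, and assume $M(s',m-1)\subseteq RM(s'-1,m-1)$ for every $s'\le s$. We treat the two kinds of elements of the union in (\ref{eq:recursive_union}) separately.

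\emph{Repeated masks.} A mask $[\mathbf{m},\mathbf{m}]$ with $\mathbf{m}\in M(s,m-1)\subseteq RM(s-1,m-1)$ lies in $RM(s-1,m)$ by the first bullet, since $(c,c)=(c\,|\,c+0)$.

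\emph{Split masks.} Consider $[\mathbf{m}_1,\mathbf{m}_2]$ with $\mathbf{m}_1\in M(i,m-1)$ and $\mathbf{m}_2\in M(s-i,m-1)$ for some $1\le i\le s-1$. By induction $\mathbf{m}_1\in RM(i-1,m-1)$ and $\mathbf{m}_2\in RM(s-i-1,m-1)$. Both orders are at most $s-2$, so by the nesting both words lie in $RM(s-2,m-1)$. The second bullet then gives $(\mathbf{m}_1,\mathbf{m}_2)=(\mathbf{m}_1\,|\,\mathbf{m}_1+(\mathbf{m}_1+\mathbf{m}_2))\in RM(s-1,m)$, because $\mathbf{m}_1\in RM(s-2,m-1)\subseteq RM(s-1,m-1)$ and $\mathbf{m}_1+\mathbf{m}_2\in RM(s-2,m-1)$ by linearity.

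The main obstacle is not algebraic; it is the bookkeeping of the recursion's termination. Sub-calls $M(i,m-1)$ can reach the terminal level $M(i,\lceil\log_2 i\rceil)$, or bottom out at $M(1,\cdot)$, at different depths for different $i$. The induction hypothesis must therefore be stated for all pairs $(s',m')$ that the recursion actually reaches. Formally, the induction is on $m$ with the claim quantified over all $s$, and the terminal sets are covered by the observation that $s-1\ge m'$ makes $RM(s-1,m')$ the full space. I would state this explicitly so that no recursion branch is left unaccounted for. Summing both cases over all $i$ completes the induction.
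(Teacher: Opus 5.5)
Your proof is correct and takes essentially the same route as the paper. Split masks land in $RM(s-1,m)$ by Plotkin's construction because both halves lie in $RM(s-2,m-1)$, and repeated masks reduce to $M(s,m-1)\subseteq RM(s-1,m-1)$, bottoming out at the terminal set $M(s,\lceil\log_2 s\rceil)$. The only differences are organizational: you run one induction on $m$ over all $s$ where the paper inducts on $s$ and descends in $m$, and your check that $s-1\ge\lceil\log_2 s\rceil$ makes the terminal set all of $RM(s-1,\lceil\log_2 s\rceil)$ states the base case more precisely than the paper's remark that it ``constitutes $RM(s,s)$''.
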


\begin{theorem}
\label{thm-masks-in-G-RMcode}
The set of masks covering  $s$ stuck bits in a binary sequence of length $2^m$ generated by the recursion in (\ref{eq:recursive_union}) contains all the rows of the generator matrix of a Reed-Muller $RM(s-1,m)$ code and their complements. Hence, we have:
\begin{equation}
N_M(s,m) \geq 2 \sum_{i=0}^{\min(m,s-1)} \binom{m}{i}
\end{equation}
\end{theorem}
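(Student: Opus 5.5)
Theorem~\ref{thm-masks-in-G-RMcode} will be proved by induction on $m$, following the Plotkin recursion for the generator matrix. First I fix the convention for the rows. By the displayed block form of $G(r,m)$, the rows of $G(s-1,m)$ are of two kinds: $[\mathbf{g},\mathbf{g}]$ with $\mathbf{g}$ a row of $G(s-1,m-1)$, and $[\mathbf{0},\mathbf{h}]$ with $\mathbf{h}$ a row of $G(s-2,m-1)$. Equivalently, the rows are the monomials in the coordinate functions of degree at most $s-1$, and there are $\sum_{i=0}^{\min(m,s-1)}\binom{m}{i}$ of them. I will prove the stronger statement that, for every $s\ge 1$ and $m\ge 0$, $M(s,m)$ contains every row of $G(s-1,m)$ and its complement.

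\textbf{Base cases.} For $s=1$, $G(0,m)$ is the all-one row, and $M(1,m)$ contains the all-one and all-zero words. When $m\le \lceil\log_2 s\rceil$, and in particular whenever $s-1\ge m$, the terminal set $M(s,m)$ is all of $\mathbb{F}_2^{2^m}$, so the claim is trivial. I also need to check that the order $s-1$ is capped at $m$, so that $G(s-1,m)=G(m,m)$ when $s-1>m$. This is consistent with the $\min$ in the bound.

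\textbf{Inductive step.} Assume the claim for length $2^{m-1}$ and all $s$.
\begin{itemize}
\item A row $[\mathbf{g},\mathbf{g}]$ with $\mathbf{g}\in G(s-1,m-1)$ lies in the first set of the union (\ref{eq:recursive_union}), because $\mathbf{g}\in M(s,m-1)$ by induction. Its complement is $[\bar{\mathbf{g}},\bar{\mathbf{g}}]$, which is handled the same way.
\item A row $[\mathbf{0},\mathbf{h}]$ with $\mathbf{h}$ a row of $G(s-2,m-1)$ requires $s\ge 2$. Take $i=1$ in the union. Then $\mathbf{0}\in M(1,m-1)$, and $\mathbf{h}\in M(s-1,m-1)$ by the induction hypothesis with $s-1$ in place of $s$. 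So $[\mathbf{0},\mathbf{h}]\in M(s,m)$. Its complement is $[\mathbf{1},\bar{\mathbf{h}}]$, and it is covered the same way using $\mathbf{1}\in M(1,m-1)$ and $\bar{\mathbf{h}}\in M(s-1,m-1)$.
\end{itemize}
Theorem~\ref{thm-masks-subsets} is not even needed here, since the $i=1$ term of the union supplies exactly the right-half set $M(s-1,m-1)$. It remains only to make sure the recursion is applied with $m-1\ge \lceil\log_2 s\rceil$. Otherwise we are in the terminal case, where everything is contained.

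\textbf{Counting.} For the bound, the rows of $G(s-1,m)$ together with their complements give $2\sum_{i=0}^{\min(m,s-1)}\binom{m}{i}$ words, and I must show they are pairwise distinct. Rows are distinct because they are linearly independent. No row equals the complement of another row $\mathbf{g}'$, since then $\mathbf{g}+\mathbf{g}'=\mathbf{1}$, and $\mathbf{1}$ is itself a row, which contradicts independence. The one exception is $\mathbf{g}'=\mathbf{0}$, but $\mathbf{0}$ is not a row. Finally, no row equals its own complement, since $\mathbf{g}\neq\mathbf{g}+\mathbf{1}$.

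The main obstacle is bookkeeping rather than mathematics. The delicate points are aligning the terminal level $l=\lceil\log_2 s\rceil$ with the Plotkin recursion, and fixing the order/indexing convention of the generator matrix rows so that it matches the recursive block form. I would state that convention explicitly at the start.
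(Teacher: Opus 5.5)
Your proof is correct and follows the same route as the paper's: induction along Plotkin's recursion, with $[\mathbf{g},\mathbf{g}]$ supplied by the first set of the union and $[\mathbf{0},\mathbf{h}]$, $[\mathbf{1},\bar{\mathbf{h}}]$ supplied by the $i=1$ term, written out explicitly where the paper only sketches it, and you also justify the pairwise distinctness needed for the count, which the paper omits. One aside should be dropped: it is not true that $M(s,m)=\mathbb{F}_2^{2^m}$ whenever $s-1\ge m$ (for instance $|M(4,3)|\le 16+32+36+32=116<256$), but this is harmless, because your inductive step already covers every non-terminal case $m>\lceil\log_2 s\rceil$, including those with $s-1\ge m$, where the capped $G(s-1,m)=G(m,m)$ keeps the Plotkin block form you use.
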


\section{Finding a label for masks in $M(s,m)$}
In order to use the recursively generated mask set $M(s,m)$, it is essential to find bit positions that uniquely represent, i.e. label, each mask. These positions within a mask are then reserved and used as redundant positions. In general, this problem can be solved by greedy search. For $s=2$, however, there is an algorithm for finding label positions.  

\subsection{Finding a label for masks in $M(2,m)$}

Represent a set of masks in $M(2,m)$ as a matrix with rows being the masks in the set. Every bit position corresponds to a column in the matrix. As in a Reed-Muller code generator matrix, every column corresponds to a point in a binary $m$-dimensional space  $X = \mathbb{F}_2^m = \{x_0,x_1,\ldots, x_{2^m-1} \}$ and 
is represented as a binary $m$-tuple. For example:
\begin{equation*}
M(2,3) = 
\begin{bmatrix}
0 & 0 & 0 & 0 & 0 & 0 & 0 & 0 \\
0 & 0 & 0 & 0 & 1 & 1 & 1 & 1 \\
0 & 0 & 1 & 1 & 0 & 0 & 1 & 1 \\
0 & 1 & 0 & 1 & 0 & 1 & 0 & 1 \\
1 & 1 & 1 & 1 & 1 & 1 & 1 & 1 \\
1 & 1 & 1 & 1 & 0 & 0 & 0 & 0 \\
1 & 1 & 0 & 0 & 1 & 1 & 0 & 0 \\
1 & 0 & 1 & 0 & 1 & 0 & 1 & 0 
\end{bmatrix},
\end{equation*}
with column $0$ corresponding to point $x_0=\left(0,0,0\right)$, column $1$ corresponding to point $x_1=\left(0,0,1\right)$, etc.

Observe that the bottom half of the masks is just a complement of the top half of the masks. Therefore, the label of each of the top masks will be a complement of the label of one of the bottom masks and the two will be different and distinguishable. Also, observe that masks consisting of all zeros and all ones will have a label all zero and all one respectively no matter what label positions are chosen. Hence, we can focus on top half of the masks without all-zero mask. For $M(2,3)$, we have:
\begin{IEEEeqnarray}{rCl}
S(2,3) &=& 
\begin{bmatrix}
0 & 0 & 0 & 0 & 1 & 1 & 1 & 1 \\
0 & 0 & 1 & 1 & 0 & 0 & 1 & 1 \\
0 & 1 & 0 & 1 & 0 & 1 & 0 & 1 \\
\end{bmatrix} \nonumber \\
\text{column number:} &  & \;\;\,  \begin{matrix}
0 & 1 & 2 & 3 & 4 & 5 & 6 & 7 \\
\end{matrix} \nonumber 
\end{IEEEeqnarray}

For the subset of masks in $S(s,m)$, we need to find a set of bits such that each mask has a unique label, but also that none of the labels is a complement of another label. If this happens, the complements of the labels will appear in the bottom half of the masks and will not be unique. Finally, none of the labels should be all-zero or all-one since those are reserved for all-zero and all-one masks. 

Note  that each column in this subset corresponds to binary column number. In this case we have a total of $8$ masks. In general, we have a total of $2(m+1)$ masks. Therefore, we need $1+\lceil \log_2 (m+1) \rceil = 3$ label bits. For example, one such choice could be:
\begin{center}
\begin{tabular}{ |c|c|c| } 
 \hline
 0 & 0 & 1 \\ 
\hline
 0 & 1 & 0 \\ 
\hline
 0 & 1 & 1 \\ 
 \hline
\end{tabular}
\end{center}
The columns correspond to numbers $0, 3$ and $5$ and those are the label positions as shown in gray in the full mask set:
\begin{IEEEeqnarray}{rCl}
\label{masked-set-s2}
M(2,3)=  \left[\begin{array}{>{\columncolor{gray!20}}ccc>{\columncolor{gray!20}}cc>{\columncolor{gray!20}}ccc}
0 & 0 & 0 & 0 & 0 & 0 & 0 & 0 \\
0 & 0 & 0 & 0 & 1 & 1 & 1 & 1 \\
0 & 0 & 1 & 1 & 0 & 0 & 1 & 1 \\
0 & 1 & 0 & 1 & 0 & 1 & 0 & 1 \\
1 & 1 & 1 & 1 & 1 & 1 & 1 & 1 \\
1 & 1 & 1 & 1 & 0 & 0 & 0 & 0 \\
1 & 1 & 0 & 0 & 1 & 1 & 0 & 0 \\
1 & 0 & 1 & 0 & 1 & 0 & 1 & 0 
  \end{array}\right]
\end{IEEEeqnarray}

The general construction of the label set is given in Algorithm \ref{alg:labels2}.

\begin{algorithm} 
\caption{Construction of Label for $M(2,m)$}\label{alg:labels2}
INPUT: Mask set $M(2,m)$ represented as $N(2,m) \times 2^m$ matrix. \\
OUTPUT: Set of label columns $\left(l_0, l_1, \ldots, l_{L-1}\right)$, with $l_j \in \{0,1,\ldots, 2^m-1\}$ for $j=0,1,2,\ldots, L-1$, and $L = 1+  \left\lceil \log_2(m+1) \right\rceil$

\begin{algorithmic}[1]  
\State Reserve the two binary strings of length $L$:
		\[
		0\cdots0 \text{ for all-zero mask}, \quad 1\cdots1 \text{ for all-one mask}.
		\]
\State  Select $m$ distinct non-complementary binary strings of length $L$ from the remaining $2^L-2$ strings. Assign these to the rows \(1, 2, \dots, m\) in $M(2,m)$. Denote the label assigned to row $k$ by \(\ell(k) \in \{0,1\}^L\). For example, we can use the binary representation of numbers, so that \(\ell(k) = (k)_{\text{base } 2}\).
\State For each complement mask, i.e. row \(m+1, m+2, \dots, 2m\), assign
		\[
		\ell(m+k) = \text{bitwise complement of } \ell(k).
		\]
\State For each bit position \(j=0,\dots,L-1\) in a binary string of length $L$, define the column
		\[
		c^{(j)} = \big( (\ell(1))_j, (\ell(2))_j, \dots, (\ell(m))_j \big) \in \{0,1\}^m
		\]
		where \((\ell(k))_j\) is the $j$-th bit of \(\ell(k)\). 
\State For each \(j=0,\dots,L-1\), label column position $l_j$ corresponding to column $c^{(j)}$ is a number such that $(l_j)_{\text{base} 2} = c^{(j)}$.

\end{algorithmic}  
\end{algorithm}
The Algorithm \ref{alg:labels2} guarantees the following:
	\begin{itemize}
		\item By construction, the restriction of row $k$ to columns \(c^{(1)},\dots,c^{(L)}\) equals its assigned label \(\ell(k)\), which is distinct from all other labels.
		\item The restriction of \(k+m\) equals the bitwise complement of \(\ell(k)\), also distinct from all others.
		\item The all-zero and all-one masks are distinguished by the reserved labels \(0\cdots0\) and \(1\cdots1\).
	\end{itemize}
Hence, the restriction map to the selected columns is injective, and every vector receives a unique label.
	
There are many possible labels for mask set $M(2,m)$. From the construction it follows that we can use any $m$ distinct $L$-tuples, $L= 1+ \lceil \log_2 (m+1) \rceil$, as long as they are not all-zero, nor all-one, and as long as no two $L$-tuples are complements of each other. Any permutation of a label set is considered the same label. Therefore, there are $\left( \prod_{i=1}^m (2^L-2i) \right) / L!$ possible labels. Also, the label is guaranteed to exist if the number of choices is at least $m$, i.e. if $(2^L-2)/2 \geq m$. However, this is true since $2^{L-1} \geq m+1$.

\subsection{Finding a label for masks in $M(s,m)$ with $s>2$}

For $s>2$, finding a label is a more challenging problem. A greedy search can provide a label for reasonable code lengths, but does not guarantee minimum label size. Given a set of masks, label search or label construction is done once, and the found set of label positions is used as stuck-at redundancy without a need to be recomputed. Hence, finding or constructing a label does not contribute to any increase in encoding or decoding complexity or latency. 

Since the label construction is not available for $s>2$, the exact label size is not known. However, an upper and lower bound exist and are given in the following two theorems. For full proof, see Appendix~\ref{app:proof-thm-label-min}.

\begin{theorem}
\label{thm-label-min}
Consider the set of masks $M(s,m)$ covering  $s>2$ stuck bits in a binary sequence of length $2^m$ generated by the recursion in (\ref{eq:recursive_union}). Let $L_{\min}(s,m)$ be a label of minimum size that uniquely represents each mask in this set. The following inequalities holds:
\begin{IEEEeqnarray}{rCl}
\label{eq:thm-label-min-I}
L_{\min}(s,m) \geq 2^{s-2} (1 + \log_2 (m-s+3)).
\end{IEEEeqnarray}
\begin{IEEEeqnarray}{rCl}
L_{\min}(s,m) &\leq& 2^{s-1} ( 2 \log_2 N_M(s,m) - 1)  +1 \label{eq:thm-label-min-II} \\
&\leq& 1 + 2^{s-1} + 2^s ( s-1) (1 + \log_2 m).  \label{eq:thm-label-min-III}
\end{IEEEeqnarray}
\end{theorem}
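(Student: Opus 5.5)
Both bounds will be proved by induction on $s$ along the recursion (\ref{eq:recursive_union}), with $s=2$ as the base case, where Algorithm~\ref{alg:labels2} gives $L_{\min}(2,m)=1+\lceil\log_2(m+1)\rceil$ exactly.

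\emph{Lower bound.} The recursion contains, as the $i=1$ and $i=s-1$ terms, all concatenations $[\mathbf{m}_1,\mathbf{m}_2]$ with $\mathbf{m}_1\in M(1,m-1)$, $\mathbf{m}_2\in M(s-1,m-1)$, and also $[\mathbf{m}_2,\mathbf{m}_1]$. I would use these to show that a label of $M(s,m)$ restricted to the right half must label $M(s-1,m-1)$, and the same for the left half. Fix the left block to the all-zero mask; the full label then separates all masks $[\mathbf{0},\mathbf{m}_2]$. Since these differ only on the right half, the label positions in the right half must form a label for $M(s-1,m-1)$. The symmetric argument applies to the left half. Because the two halves are disjoint, this gives $L_{\min}(s,m)\ge 2L_{\min}(s-1,m-1)$. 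Iterating $s-2$ times reaches $L_{\min}(2,m-s+2)\ge 1+\log_2(m-s+3)$, and multiplying by $2^{s-2}$ yields (\ref{eq:thm-label-min-I}). The subtle point is that a label must separate the \emph{distinct} masks, so I need $M(s-1,m-1)$ taken as a set; this is harmless because distinct right blocks give distinct concatenations.

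\emph{Upper bound.} I would build a label recursively: take a label $P$ that works simultaneously for every sub-family appearing in the halves, and place it in both halves. Two distinct masks $[\mathbf{a},\mathbf{b}]\neq[\mathbf{a}',\mathbf{b}']$ differ in some half, so they are separated there provided $P$ labels $\bigcup_{i\le s-1}M(i,m-1)$. By Theorem~\ref{thm-masks-subsets}, this union equals $M(s,m-1)$ once the diagonal part is included. That yields the recursion $L(s,m)\le 2L(s,m-1)$, which however grows with $m$ and is too weak. I would therefore instead use the generic greedy bound: any set of $N$ distinct vectors admits a separating set of at most $N-1$ coordinates, and a smarter halving argument brings this down to $O(\log N)$ per level. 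My concrete plan is to combine the two ideas. Separate the $2^{s-1}$ or so blocks of the recursion tree of depth $s-1$, which accounts for the $2^{s-1}$ factor. Within each block, greedily choose coordinates so that each new coordinate at least halves the largest unresolved class. By the counting from Theorem~\ref{thm-max-num-masks} this costs about $2\log_2 N_M(s,m)-1$ bits, and one extra bit handles the complement pair. Substituting $N_M(s,m)\le 2^s m^{s-1}$ gives $2\log_2 N_M-1\le 2s+2(s-1)\log_2 m-1$. Multiplying by $2^{s-1}$ and rearranging as $2^{s-1}+2^s(s-1)(1+\log_2 m)$ gives (\ref{eq:thm-label-min-III}).

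The main obstacle is justifying that a greedy coordinate halves unresolved classes at each step, which in general is false for arbitrary vector sets. I would try to derive it from the Reed–Muller structure: by Theorem~\ref{thm-masks-in-RMcode}, the masks lie in $RM(s-1,m)$, and differences of distinct masks are nonzero codewords of weight at least $2^{m-s+1}$. An averaging argument over coordinates then shows that some coordinate splits any class into parts no larger than a constant fraction. The resulting constant loss is what produces the factor $2$ in $2\log_2 N_M$, and the per-level factor $2^{s-1}$ comes from restricting to subcubes.
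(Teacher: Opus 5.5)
Your lower bound is correct and is the paper's argument. You make it more precise by pinning the other half to the all-zero mask of $M(1,m-1)$, which the paper leaves implicit. Your arithmetic from (\ref{eq:thm-label-min-II}) to (\ref{eq:thm-label-min-III}) via Theorem~\ref{thm-max-num-masks} is also right.

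\textbf{The gap: (\ref{eq:thm-label-min-II}) is not derived.} What you offer there is a collection of heuristics whose constants are fitted to the target.

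\emph{The greedy scheme does not give a logarithmic count.} You propose choosing each coordinate to split the \emph{largest unresolved class} by a constant fraction. Averaging within one class only guarantees the larger part is at most roughly a $1-2^{-s}$ fraction, not a half. Even granting such a split, a label is a single coordinate set acting on all classes at once. A coordinate chosen to split one class need not split any other. So the scheme can spend about one coordinate per class, up to $N-1$ in total.

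\emph{The constants are not produced.} ``Separating the $2^{s-1}$ blocks of the recursion tree'' is never defined, and it is not where $2^{s-1}$ comes from. The factor $2$ and the $-1$ in $2\log_2 N_M-1$ are asserted, not obtained. The final $+1$ has nothing to do with complement pairs.

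\textbf{The missing idea.} Apply the RM-distance averaging to the \emph{global} set of unresolved pairs, not to one class. Let $N=N_M(s,m)$, $n=2^m$, $d=2^{m-s+1}$.
\begin{enumerate}
\item By Theorem~\ref{thm-masks-in-RMcode}, $M(s,m)\subseteq RM(s-1,m)$, so every pair of distinct masks is separated by at least $d$ coordinates.
\item A pair not yet separated by the chosen coordinates has all of those $\ge d$ separating coordinates among the unchosen columns.
\item Summing over all $n$ columns, some column separates at least a $d/n=2^{-(s-1)}$ fraction of the currently unresolved pairs.
\item Choosing such a column greedily $L$ times leaves at most $N_p\,(1-2^{-(s-1)})^L$ unresolved pairs, where $N_p=N(N-1)/2<N^2/2$.
\item Using $\log_2(1-x)\le -x$, this count is $<1$ as soon as $L>2^{s-1}(2\log_2 N-1)$.
\end{enumerate}
This accounts for every constant. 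The factor $2^{s-1}$ is $n/d$, the reciprocal of the relative minimum distance of $RM(s-1,m)$. The $2$ and the $-1$ come from $\log_2(N^2/2)$. The $+1$ is simply the smallest integer strictly exceeding the threshold. This is how the paper proves (\ref{eq:thm-label-min-II}).
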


\section{Reed-Muller codes for joint random and stuck-at error correction}

Theorem \ref{thm-masks-in-RMcode} shows that the masks in the set $M(s,m)$ generated by the recursive construction in (\ref{eq:recursive_union}) are codewords in any Reed-Muller code $RM(r,m)$ with $s-1 \leq r$. This directly implies that any codeword in $RM(r,m)$ for $r \geq s-1$ masked with a mask in $M(s,m)$ remains a codeword in the same code $RM(r,m)$ and its full error correcting capability of $RM(r,m)$ can be used. The encoding and decoding are done as described in Fig. \ref{fig-block-diagram-jnt-stuck-rand-error-encoder} and Fig. \ref{fig-block-diagram-jnt-stuck-rand-error-decoder}, and Algorithms \ref{alg:encII} and \ref{alg:decII}.

During encoding, before forming a codeword that needs to be masked, we must ensure that the mask label positions remain equal to zero after encoding, so that after masking those positions indicate what mask was used. One way to make sure label positions remain zero after encoding is to have the label positions correspond to a part of a systematic portion of the $RM$ codeword. The minimum number of linearly dependent columns in the generator matrix of $RM(r,m)$ code is equal to the minimum distance of its dual code $RM(m-r-1,m)$, $2^{r+1}$. Therefore, if a label is found of size $L< 2^{r+1}$, its bit positions are guaranteed to be independent and a part of a systematic portion of the $RM$ code used. However, if $L \geq 2^{r+1}$, we need to make sure that the corresponding columns in the generator matrix of $RM$ code used are indeed linearly independent. Once an intermediate codeword is formed and stuck bit positions and their values are available, appropriate covering mask can be generated in a recursive manner.

On the decoder side, the mask used needs to be recovered based on the label bits extracted from the decoded codeword. Storing the entire mask set ensures quick retrieval of the mask corresponding to a specific label in cases where we have relatively small mask set. However, due to recursive construction, it may be possible to avoid storing the entire mask set.  For example, for a given label of the mask set, groups of $s$ label bits can be used to generate sets of candidate masks and the final mask can be determined as the intersection of all candidate mask sets. This would add some additional complexity to the decoder.

As an encoding example, consider the mask set $M(2,3)$ shown in (\ref{masked-set-s2}) with label bit positions $0$, $3$, and $5$. Since $M(2,3) \subset RM(2,3)$, we can use $RM(2,3)$ code, i.e. a single parity check code with $n=8$ and $k=7$, for single error detection, and bits at positions $0$, $3$ and $5$ to cover $2$ stuck-at errors. This results in an $(8,4)$ code. Upon receiving user data $(u_0,u_1,u_2,u_3)=(1,1,0,1)$, we form an intermediate message by inserting zeros in positions corresponding to mask label $(v_0,v_1,\ldots,v_6)=(\underline{0}, 1, 1, \underline{0}, 0, \underline{0}, 1)$.
The intermediate codeword is obtained by encoding the intermediate message: $\mathbf{b} = (\underline{0}, 1, 1, \underline{0}, 0, \underline{0}, 1, 1)$. Assume that the side information is available and that there are stuck-at defects at bit positions $2$ and $5$, both at value $1$. Therefore, we need to find a mask that has values $0$ and $1$ at positions $2$ and $5$, respectively. One such mask is $\mathbf{m}=(0,0,0,0,1,1,1,1)$. Adding it to the intermediate codeword, we obtain the final codeword  $\mathbf{b} = (\underline{0}, 1, \mathbf{1}, \underline{0}, 1, \underline{\mathbf{1}}, 0, 0)$. Note that the final codeword satisfies stuck-at defect values (bit values in bold), and is also a codeword in $RM(2,3)$ code. The underlined bits represent a mask label and based on them we can uniquely identify which mask in the set $M(2,3)$ was used. On the decoding side, if no errors are detected in a read word, based on the label positions we determine the mask estimate $\hat{\mathbf{m}}$. Adding it to the received word forms an intermediate codeword estimate $\hat{\mathbf{b}}$. The message estimate $\hat{\mathbf{u}}$ is obtained by removing label bits and single parity check bit from the intermediate codeword estimate.

A table of all codes found  for $s \leq 4$ and $m \leq 12$ is given in Table \ref{tab1}. The redundancy, i.e. the size of the found label, is denoted by  $r$, while the lower bound on the redundancy from Theorem~\ref{thm-label-min} is denoted by $r_{lb}$. It is worth pointing out that the labels for $s=3$ and $s=4$ are found by greedy and non-exhaustive search and better, i.e. smaller size labels could exist.

\begin{table}[htbp]
\caption{Reed-Muller Codes for stuck-at and random error correction}
\begin{center}
\begin{tabular}{|c|c|c|c|c|c|}
\hline
$m$ & $s$  & $n$ & $N_M(s,m)$  & $r$ & $r_{lb}$ \\
\hline
\hline
\multirow{2}{*}{3} & 2 & \multirow{2}{*}{8}  & 8 & 3 &  3\\
\cline{2-2} \cline{4-6} 
& 3 & & 34 & 6 & 6 \\
\hline
\multirow{3}{*}{4} & 2 & \multirow{3}{*}{16}  & 10 & 4 & 4 \\
\cline{2-2} \cline{4-6} 
& 3 & & 60 & 8 & 6\\
\cline{2-2} \cline{4-6} 
& 4 & & 246 & 14 & 12\\
\hline
\multirow{3}{*}{5} & 2 & \multirow{3}{*}{32}  & 12 & 4 & 4 \\
\cline{2-2} \cline{4-6} 
& 3 & & 94 & 9 & 8\\
\cline{2-2} \cline{4-6} 
& 4 & & 536 & 19 & 12\\
\hline
\multirow{3}{*}{6} & 2 & \multirow{3}{*}{64}  & 14 & 4 &  4\\
\cline{2-2} \cline{4-6} 
& 3 & & 136 & 9 & 8\\
\cline{2-2} \cline{4-6} 
& 4 & & 996 & 24 & 16\\
\hline
\multirow{3}{*}{7} & 2 & \multirow{3}{*}{128}  & 16 & 4 & 4 \\
\cline{2-2} \cline{4-6} 
& 3 & & 186 & 10 & 8\\
\cline{2-2} \cline{4-6} 
& 4 & & 1666 & 25 & 16\\
\hline
\multirow{3}{*}{8} & 2 & \multirow{3}{*}{256}  & 18 & 5 & 5 \\
\cline{2-2} \cline{4-6} 
& 3 & & 244 & 12 & 8\\
\cline{2-2} \cline{4-6} 
& 4 & & 2586 & 27 & 16\\
\hline
\multirow{3}{*}{9} & 2 & \multirow{3}{*}{512}  & 20 & 5 &  5\\
\cline{2-2} \cline{4-6} 
& 3 & & 310 & 13 & 10\\
\cline{2-2} \cline{4-6} 
& 4 & & 3796 & 31 & 16\\
\hline
\multirow{3}{*}{10} & 2 & \multirow{3}{*}{1024}  & 22 & 5 & 5 \\
\cline{2-2} \cline{4-6} 
& 3 & & 384 & 13 & 10\\
\cline{2-2} \cline{4-6} 
& 4 & & 5336 & 31 & 20\\
\hline
\multirow{3}{*}{11} & 2 & \multirow{3}{*}{2048}  & 24 & 5 &  5\\
\cline{2-2} \cline{4-6} 
& 3 & & 466 & 13 & 10\\
\cline{2-2} \cline{4-6} 
& 4 & & 7246 & 32 & 20\\
\hline
\multirow{3}{*}{12} & 2 & \multirow{3}{*}{4096}  & 26 & 5 & 5 \\
\cline{2-2} \cline{4-6} 
& 3 & & 556 & 14 & 10\\
\cline{2-2} \cline{4-6} 
& 4 & & 9566 & 33 & 20\\
\hline
\end{tabular}
\label{tab1}
\end{center}
\end{table}

Let's look at a few examples from the Table \ref{tab1}  in more detail.

{\it Example 1:}
The set $M(3,6)$ consists of $136$ masks that can cover any $3$ stuck-at positions at any binary values in a sequence of length $n=64$. Fig.~\ref{mask_set_s3_m6} shows the set of masks as rows of black and white pixels where black pixel corresponds to $1$ and white pixel corresponds to $0$. It reveals the fractal structure of the mask set. To distinctly label $136$ masks in the $M(3,6)$ set, we found the following $9$ bit locations using greedy search: $(0, 3, 14, 20, 25, 43, 50, 60, 63)$. In this way we obtain a $(64,55)$ code that can cover any $3$ stuck-at errors among $64$ bits. All the masks in $M(3,6)$ belong to a $RM(2,6)$ code, but also to higher order codes, $RM(3,6)$, $RM(4,6)$ and $RM(5,6)$. This offers extra random error protection for $t=7$ if $RM(2,6)$ is used, $t=3$, if $RM(3,6)$ is used, $t=1$ is $RM(4,6)$ is used and single random error detection if $RM(5,6)$ is used. In each case, total redundancy is combined random and stuck-at redundancy. 

\begin{figure}[htbp]
\centerline{\includegraphics[width=4.5cm]{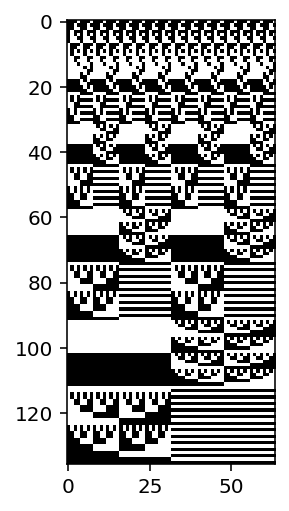}}
\caption{Fractal structure of the recursively constructed mask set, shown in the example of $M(3,6)$}
\label{mask_set_s3_m6}
\end{figure}

{\it Example 2:}
The set $M(3,10)$ consists of $384$ masks that can cover any $3$ stuck-at positions at any binary values in a sequence of length $n=1024$. To distinctly label $384$ masks in the $M(3,10)$ set, we found the following $13$ bit locations using greedy search: $(0, 2, 76, 112, 255, 339, 410, 421, 555, 662, 797, 870, 952)$. In this way we obtain a $(1024,1011)$ code that can cover any $3$ stuck-at errors among $1024$ bits. All the masks in $M(3,10)$ belong to a $RM(r,10)$ code with $r=2,3,4,5,6,7, 8$, offering joint random error protection for $t=127, 63, 31, 15, 7, 3, 1$, respectively. If $RM(9,10)$ is used only a single error detection is possible.

{\it Example 3:}
The set $M(4,9)$ consists of $3796$ masks that can cover any $4$ stuck-at positions at any binary values in a sequence of length $n=512$. To distinctly label $3796$ masks in the $M(4,9)$ set, we found the following $31$ bit locations using greedy search: $(0,  40,  49,  63,  86,  88,  99, 106, 135, 154, 166, 172, 205, 208, 233,$ $241, 246, 267, 284, 294, 306, 320, 345, 357, 383, 405, 425, 439,$ $451, 462, 508)$. In this way we obtain a $(512,481)$ code that can cover any $4$ stuck-at errors among $512$ bits. All the masks in $M(4,9)$ belong to a $RM(r,9)$ code with $r=3,4,5,6,7$, offering joint random error protection for $t=31, 15, 7, 3, 1$, respectively. If $RM(8,9)$ is used only a single error detection is possible.

\section{Conclusion}
A recursive construction of a set of masks is developed such that it can satisfy any $s$ stuck-at errors in a $2^m$ binary sequence, when $s \leq m$. We prove that the masks generated in this way are codewords in a Reed-Muller $RM(s-1, m)$ code. The constructed set contains no more than $2^s m^{s-1}$ masks. The stuck-at redundancy is embedded in the mask and uniquely represents each mask in the set. We prove the lower and the upper bound on the stuck-at redundancy in general, and label generation algorithm for $s=2$. The stuck-at code described is a non-linear code, but nevertheless a subcode of an $RM(r,m)$ code with $ r \geq s-1$. The supercode $RM(r,m)$ can be used for additional random error correction with its full error correcting capability. The encoding requires no mask search and is straightforward based on the description of the recursive construction. The decoding is done in a single decoding attempt and requires almost no additional complexity or latency. 

Several open problems remain. A detailed description of implementations of the encoder and decoder are not fully worked out, but due to high structure of the mask set, efficient implementations seem quite possible. A general algorithm for finding label bit positions for any $s$ would also greatly contribute to the quality of this work. Finally, a $RM(s-1,m)$ code has a dual code with minimum distance of $2^s$ and therefore its code space can mask any $2^s-1$ stuck-at errors. This comes, of course, at a much higher redundancy cost, utilizing full $RM(s-1,m)$ code redundancy. However, there may be a way to increase both the currently constructed mask set and stuck-at correction capability, finding the compromise between the two methods, linear (based on the dual code) and the non-linear recursive one described in this work. 

\section{Appendix}

%\appendix
\subsection{Proof of Theorem \ref{thm-max-num-masks} \label{app:proof-thm-max-num-masks}}
\begin{proof}
Let $N_M(s,m)$ denote the number of masks in $M(s,m)$. From the recursive construction of $M(s,m)$ we have the following inequality for the number of elements in $M(s,m)$:
\begin{IEEEeqnarray}{rCl}
N_M(s,m) &\leq& N_M(s,m-1) + \nonumber \\
&\sum_{i=1}^{s-1}& N_M(i,m-1) N_M(s-i,m-1) \nonumber \\
&=& N_M(s,m-2) +   \nonumber \\
&\sum_{j=1}^2& \sum_{i=1}^{s-1} N_M(i,m-j) N_M(s-i,m-j) \nonumber \\
\ldots &=& N_M(s,l) +   \nonumber \\
& \sum_{j=1}^{m-l}& \sum_{i=1}^{s-1} N_M(i,m-j) N_M(s-i,m-j). \nonumber
\end{IEEEeqnarray}
We can prove the bound on $N_M(s,m)$ by induction. Assume that $N_M(t,m) \leq 2^t m^{t-1}$ for all $t<s$ and all $m \geq \lceil \log_2 t \rceil$. Observe that this is true for $t=1$  as $N_M(1,m) = 2 \leq 2$ and  $t=2$ as $N_M(2,m)=2(m+1) \leq 2^2 m$ for all $m \geq 1$. Therefore, for $s > 2$ we have:
\begin{IEEEeqnarray}{rCl}
N_M(s,m) &\leq& 2^{2^l}+ \sum_{j=1}^{m-l} \sum_{i=1}^{s-1} 2^s (m-j)^{s-2} \nonumber \\
&=& 2^{2^l}+  2^s (s-1) \sum_{j=1}^{m-l} (m-j)^{s-2} \nonumber \\
&=& 2^{2^l}+ 2^s (s-1) \sum_{j=l}^{m-1} j^{s-2} \nonumber \\
&\leq& 2^{2^l}+ 2^s (s-1) \int_l^{m} x^{s-2} dx \nonumber \\
&=&  2^{2^l}+ 2^s \left( m^{s-1} -l^{s-1} \right) \nonumber \\
& = & 2^{2^l}-2^s l^{s-1} + 2^s m^{s-1}  \nonumber \\
& \leq & 2^s m^{s-1}. \nonumber 
\end{IEEEeqnarray}
When $s \geq 2$, we have the following inequalities: $l \leq \log_2 s+ 1$, $l \geq 2$ and $l \geq 2^{s-1}$, and consequently $2^{2^l} \leq 2^s l^{s-1}$. This concludes the proof.
\end{proof}

\subsection{Proof of Theorem \ref{thm-masks-subsets} \label{app:proof-thm-masks-subsets}}
\begin{proof}
It is easy to see that the theorem holds for $s=2$ and all $m$. Assume $M(t-1,m) \subset M(t,m)$ for all $2 \leq t < s$ and all $m$.  
Denote with $M(t,m) | M(t,m)$ a set of masks of length $2^{m+1}$ obtained by simply concatenating each mask in $\mathbf{m} \in M(t,m)$ with itself, i.e. as $\left[ \mathbf{m}, \mathbf{m} \right]$.
Also, denote with $M(t,m)  \bigotimes M(j,m)$ a set of masks of length $2^{m+1}$ obtained by concatenating each mask in $M(t,m)$ with each mask in $M(j,m)$. 
With this notation, we can represent recursive construction of $M(s,m)$ in (\ref{eq:recursive_union}) as shown in the right hand side of Fig.~\ref{fig5-thm2}.
\begin{figure}[htbp]
\centerline{\includegraphics[width=5.5cm]{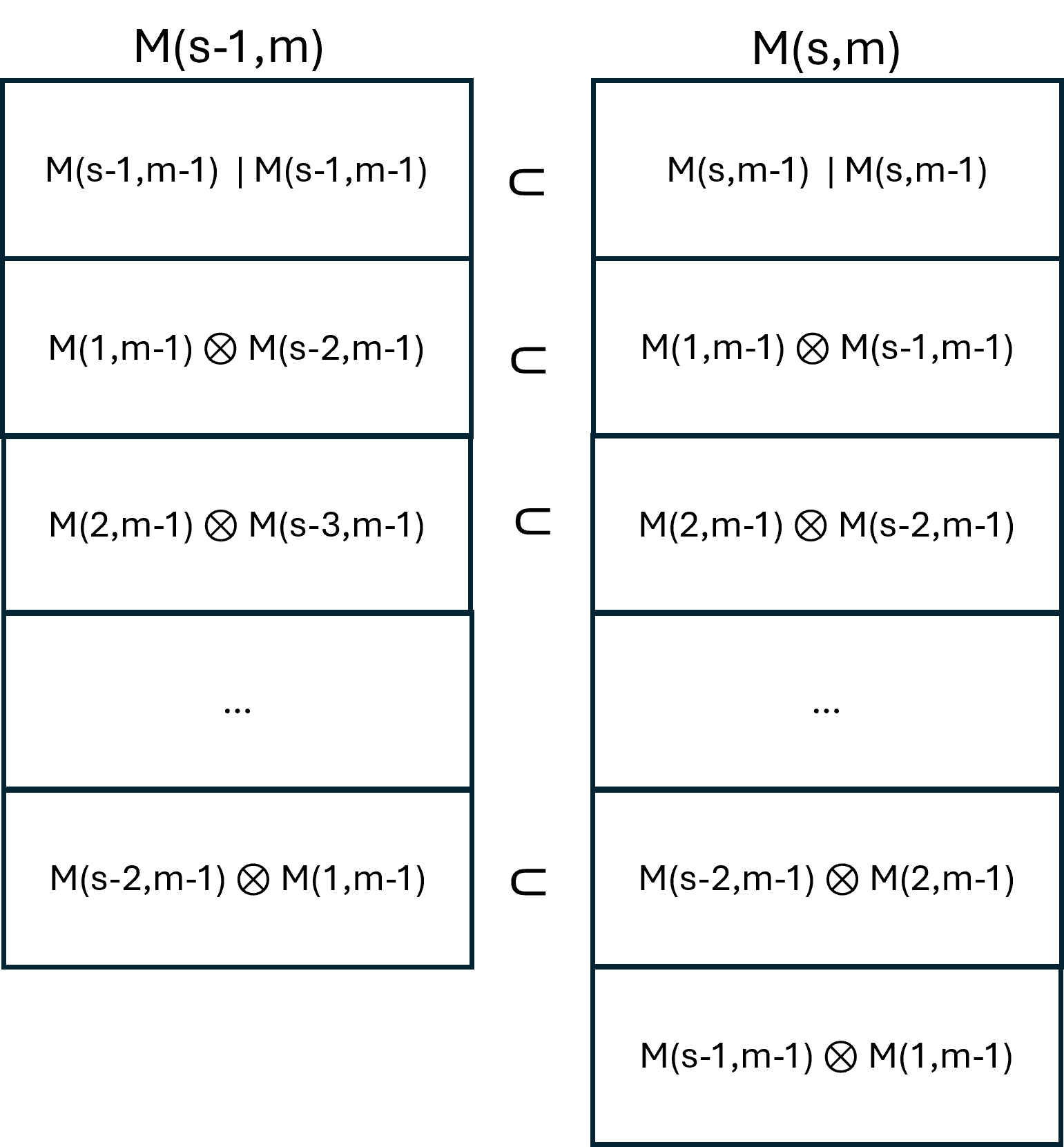}}
\caption{Relationship between a set of masks  of length $2^m$ that can cover any $s$ and any $s-1$ stuck errors}
\label{fig5-thm2}
\end{figure}
Since $M(s-j-1,m-1) \subset M(s-j,m-1)$ for $1 \leq j \leq s-2$, all subsets of masks in $M(s-1,m)$ of type $M(j,m-1) \bigotimes M(s-1-j, m-1)$ are subsets of $M(s,m)$. 
It remains to show that $M(s-1,m-1) \subset M(s,m-1)$. Following the recursion, we again convince ourselves that all subsets of masks in $M(s-1,m-1)$ of type $M(j,m-2) \bigotimes M(s-1-j, m-2)$ are subsets of $M(s,m-1)$. 
Next, we need to show that $M(s-1, m-2) \subset M(s, m-2)$. We continue with the recurision until we either reach mask set of length $2^l$ with $l=\lceil \log_2 s \rceil$, and since $s>2$, $l \geq 2$. By construction, $M(s,l)$ is a set of all $2^l$-tuples, and therefore it must contain all masks in $M(s-1,l)$. This concludes the proof. 
\end{proof}

\subsection{Proof of Theorem \ref{thm-masks-in-RMcode} \label{app:proof-thm-masks-in-RMcode}}
\begin{proof}
We proceed with induction. As we have seen, the theorem holds for $s=1$ and $s=2$ for any $m$. Assume the theorem holds for all sets of masks covering $t$ stuck bits $t<s$, where $s>2$. We need to show that the theorem is also valid for the set of masks covering $s$ stuck bits. Consider any mask of type $\left[ \mathbf{m}_1, \mathbf{m}_2 \right]$ with $\mathbf{m_1} \in M(i,m-1)$, $\mathbf{m}_2 \in M(s-i, m-1)$, and $i=1,2,\ldots,s-1$ in the construction in (\ref{eq:recursive_union}). Masks $\mathbf{m}_1$ and $\mathbf{m}_2$ are codewords in $RM(i-1,m-1)$ and $RM(s-i-1,m-1)$, respectively, and therefore also codewords in $RM(s-2,m)$. From the recursive construction of Reed-Muller code, we conclude that  $\left[ \mathbf{m}_1, \mathbf{m}_2 \right]$ is a codeword in $RM(s-1, m)$. We need to show that any mask of the type $\left[ \mathbf{m}, \mathbf{m} \right]$ with $\mathbf{m} \in M(s,m-1)$ in also in $RM(s-1,m)$. To show this, it is sufficient to show that $\mathbf{m}$ is in $RM(s-1,m-1)$, because by connecting a codeword in $RM(s-1,m-1)$ with itself, we get a codeword in $RM(s-1,m)$. Consider a mask $\mathbf{m} \in M(s,m-1)$. Using recursive construction of a mask set, we know that it either belongs to the set of masks $\left[ \mathbf{m}_1, \mathbf{m}_2 \right]$ with $\mathbf{m_1} \in M(i,m-2)$, $\mathbf{m}_2 \in M(s-i, m-2)$, and $i=1,2,\ldots,s-1$ or is of a type $\left[ \mathbf{m}, \mathbf{m} \right]$ with $\mathbf{m} \in M(s,m-2)$. In the former case, we conclude that the masks are codewords in $RM(s-1,m-1)$. In the latter case, we have to show that $\left[ \mathbf{m}, \mathbf{m} \right]$ with $\mathbf{m} \in M(s,m-2)$ is a codeword in $RM(s-1,m-1)$, which is done by showing that $\mathbf{m} \in M(s,m-2)$ is a codeword in $RM(s-1,m-2)$. We continue using the recursion until we reach the set of masks in $M(s,l)$ with $l=\lceil \log_2 s \rceil$. From the construction, we know that mask set $M(s,l)$ consists of all possible $2^l$-tuples, i.e. it constitutes $RM(s,s)$ code. This concludes the proof.
\end{proof}

\subsection{Proof of Theorem \ref{thm-masks-in-G-RMcode} \label{app:proof-thm-masks-in-G-RMcode}}
\begin{proof}
The set $M(1,m)$ contains rows in the generator matrix of $RM(0,m)$, i.e. a single all-one vector of length $2^m$. The set $M(1,m)$ also contains all-zero vector, i.e. the complement of the generator matrix. Furthermore, masks in $M(2,m)$ also contain all rows of $G(1,m)$ and their complements. Due to recursive construction of the masks and Plotkin's resursive construction of the generator matrix of $RM$ code, we conclude that the set of masks covering $s$ stuck bits in a binary sequence of length $2^m$ generated by the recursion in (\ref{eq:recursive_union}) contains all the rows in the Plotkin's generator matrix of Reed-Muller code of order $s-1$ and length $2^m$, $RM(s-1,m)$, and their complements. 
\end{proof}

\subsection{Proof of Theorem \ref{thm-label-min} \label{app:proof-thm-label-min}}
\begin{proof}[Proof of (\ref{eq:thm-label-min-I})]
From the construction in \ref{eq:recursive_union}, a set of masks $M(s,m)$ contains a mask of type $\left[ \mathbf{m}, \mathbf{m}^{\prime} \right]$ where $\mathbf{m}$ can be any mask in $M(s-1,m-1)$. It also contains a mask of type $\left[ \mathbf{m}^{\prime}, \mathbf{m} \right]$ where $\mathbf{m}$ can be any mask in $M(s-1,m-1)$. Consequently, any label set of size $L$, $(l_0, l_1, \ldots, l_{L-1})$, must be able to uniquely define a mask in $\mathbf{m} \in M(s-1,m-1)$ both on the left and on the right hand side. Therefore:
\begin{IEEEeqnarray}{rCl}
L(s,m) & \geq & 2 L(s-1,m-1) \geq 2^2 L(s-2, m-2) \nonumber \\
& \cdots  & \geq 2^{s-2} L(2,m-s+2) \nonumber \\
&= & 2^{s-2} (1+ \lceil \log_2 (m-s+3) \rceil). \nonumber 
\end{IEEEeqnarray}
\end{proof}

\begin{proof}[Proof of (\ref{eq:thm-label-min-II}) and (\ref{eq:thm-label-min-III})]
To prove the second part of the theorem, we recognize that any valid label must distinguish any pair of masks. To simplify the notation, let $N=N_M(s,m)$, and $n=2^m$. The number of different pairs of masks is $N_p=N(N-1)/2$. Create a $N_p \times n$ grid, with each row corresponding to a pair of masks and each column corresponding to a mask bit position. Check each location in the grid where coresponding mask bit position distinguishes betwen coresponding pair of masks. We start selecting mask bit positions to be placed in a set of label bits. Every time we select a position, we eliminate all the rows that are checked by that bit position, i.e. that have a check in the corresponding column. Next, we observe that $M(s,m)$ is a subset of $RM(s-1, m)$ with minimum distance $d=2^{m-s+1}$. This implies that every row in the grid has at least $d=2^{m-s+1}$ check marks. Denote with $\gamma_i$ an average checks per column in the grid before selection of the $i$-th label bit, and with $\bar{d}_i$ an average checks per row in a grid before selection of the $i$-th label bit. We have:
\[ \gamma_0 n = N_p \bar{d}_0 \]
\[ \gamma_0 = N_p \bar{d}_0/n \]
Hence, we can find a column with at least $\gamma_0$ checks, choose it to be our first label column, and remove at least $\gamma_0$ rows. This leaves us with at most $N_p (1-\bar{d}_0/n)$ rows. In the next step we have:
\[ \gamma_1 = N_p (1-\bar{d}_0/n) \bar{d}_1/n \]
Therefore, there is a column with at least $\gamma_1$ checks, choose it to be our second label column, and remove at least $\gamma_1$ rows. This leaves us with at most $N_p (1-\bar{d}_0/n)(1-\bar{d}_1/n)$ rows. After $L$ steps and $L$ label bit selections, we will have at most $N_p \prod_{i=0}^{L-1} (1-\bar{d}_i/n)$ rows. Since each average checks per row is at least $d$, the largest number of remaining rows is less than $N_p (1-d/n)^L$. The label will distinguish all pairs if the number of ramaining rows is less than 1, i.e. if:
\begin{IEEEeqnarray}{rCl}
& & N_p (1-d/n)^L < 1 \nonumber \\
& \Leftrightarrow & (N^2/2) (1-d/n)^L < 1 \nonumber \\
& \Leftrightarrow & 2 \log_2 N + L \log_2 (1-d/n) -1 < 0 \nonumber \\
& \Leftrightarrow & 2 \log_2 N - L d/n-1 < 0 \nonumber \\
& \Leftrightarrow & L > 2^{s-1} (2 \log_2 N_M(s,m) -1)
\end{IEEEeqnarray}
We conclude that there exists a label of size $L \leq  2^{s-1} ( 2 \log_2 N_M(s,m) - 1)+1$ that distinguishes all pairs of masks and therefore uniquely represents each mask. The second inequality folows directly from Theorem \ref{thm-max-num-masks}.
\end{proof}

\bibliographystyle{IEEEtran}
\bibliography{RM4jnt_stuck_rnd_err_coding.bib}

%\end{thebibliography}

\end{document}